\theoremstyle{plain}
\newtheorem{theorem}{Theorem}
 \definecolor{BLACK}{gray}{0}
 \definecolor{WHITE}{gray}{1}
 \definecolor{RED}{rgb}{1,0,0}
 \definecolor{GREEN}{rgb}{0,1,0}
 \definecolor{BLUE}{rgb}{0,0,1}
 \definecolor{CYAN}{cmyk}{1,0,0,0}
 \definecolor{MAGENTA}{cmyk}{0,1,0,0}
 \definecolor{YELLOW}{cmyk}{0,0,1,0}
\def\identity{\leavevmode\hbox{\small1\kern-3.8pt\normalsize1}}
\newcommand{\ket}[1]{\left | #1 \right\rangle}
\newcommand{\bra}[1]{\left \langle #1 \right |}
\renewcommand{\epsilon}{\varepsilon}
\begin{document}

\title{Correlations and energy in mediated dynamics}

\author{Tanjung Krisnanda}
\affiliation{School of Physical and Mathematical Sciences, Nanyang Technological University, 637371 Singapore, Singapore}
%\blfootnote{Correspondence: }

\author{Su-Yong Lee}
\email{Current address: Agency for Defense Development, Daejeon 34186, Korea}
\affiliation{School of Computational Sciences, Korea Institute for Advanced Study, Hoegi-ro 85, Dongdaemun-gu, Seoul 02455, Korea}

\author{Changsuk Noh}
\affiliation{Kyungpook National University, Daegu 41566, Korea}

\author{Jaewan Kim}
\affiliation{School of Computational Sciences, Korea Institute for Advanced Study, Hoegi-ro 85, Dongdaemun-gu, Seoul 02455, Korea}

\author{Alexander Streltsov}
\affiliation{Centre for Quantum Optical Technologies, Centre of New Technologies, University of Warsaw, 02-097 Warsaw, Poland}

\author{Timothy C. H. Liew}
\affiliation{School of Physical and Mathematical Sciences, Nanyang Technological University, 637371 Singapore, Singapore}
\affiliation{MajuLab, International Joint Research Unit UMI 3654, CNRS, Universit\'{e} C\^{o}te d'Azur, Sorbonne Universit\'{e}, National University of Singapore, Nanyang Technological University, Singapore}

\author{Tomasz Paterek}
\affiliation{Institute of Theoretical Physics and Astrophysics, Faculty of Mathematics, Physics and Informatics, University of Gda\'{n}sk, 80-308 Gda\'{n}sk, Poland}

\begin{abstract}
The minimum time required for a quantum system to evolve to a distinguishable state is set by the quantum speed limit, and consequently influences the change of quantum correlations and other physical properties.
Here we study the time required to maximally entangle two principal systems interacting either directly or via a mediating ancillary system, under the same energy constraints.
The direct interactions are proved to provide the fastest way to entangle the principal systems, but it turns out that there exist mediated dynamics that are just as fast.
We show that this can only happen if the mediator is initially correlated with the principal systems.
These correlations can be fully classical and can remain classical during the entangling process.
The final message is that correlations save energy: 
one has to supply extra energy if maximal entanglement across the principal systems is to be obtained as fast as with an initially correlated mediator.
\end{abstract}

\maketitle

%\section{Introduction}
An evolution of a quantum state into a distinguishable one requires finite time.
The shortest time to achieve this task is governed by the quantum speed limit (QSL).
The first lower bound on the shortest time was derived in a pioneering work by Mandelstam and Tamm~\cite{mandelstam}.
Thereafter, important advancements and extensions of the QSL were reported, for example, for pure states~\cite{pure1,pure2,margolus} as well as mixed states~\cite{mix1,mix2,mix3,mix4}.
The applications of these fundamental findings have been valuable in many areas, e.g., in the analysis for the rate of change of entropy~\cite{deffner2010generalized}, 
the limitations in quantum metrology~\cite{giovannetti2011advances} and quantum computation~\cite{lloyd2000ultimate,lloyd2002computational}, 
and the limit on charging capability of quantum batteries \cite{binder2015quantacell,binderphdthesis,campaioli2017enhancing}.
See also Refs.~\cite{shanahan2018quantum,okuyama2018quantum} for studies showing the application of QSL in the classical regime.

The widely accepted time bound for an evolution of a quantum state $\rho$ (in general, mixed) to another state $\sigma$ is known as the \emph{unified} QSL~\cite{unifiedbound,revbound}, which reads
\begin{equation}\label{EQ_bound}
\tau(\rho,\sigma)\ge \hbar \frac{\Theta(\rho,\sigma)}{\min \{ \langle H\rangle,\Delta H\}},
\end{equation}
where $\Theta(\rho,\sigma)=\arccos( \mathcal{F}(\rho,\sigma))$ denotes a distance measure known as the Bures angle, 
$\mathcal{F}(\rho,\sigma)=\mbox{tr}\left( \sqrt{\sqrt{\rho}\sigma\sqrt{\rho}} \right)$ the Uhlmann root fidelity \cite{fidelity1,fidelity2}, 
$\langle H\rangle=\mbox{tr}(H\rho)-E_g$ the mean energy taken relative to the ground level of the Hamiltonian, $E_g$, 
and $\Delta H=\sqrt{\mbox{tr}[H^2\rho]-\mbox{tr}[H\rho]^2}$ the standard deviation of energy (SDE).
%For the case of pure states, say, $|\psi \rangle$ and $|\phi \rangle$, the Bures angle reduces to the Fubini-Study distance, which reads $\Theta(\ket{\psi}\bra{\psi},\ket{\phi}\bra{\phi})=\arccos|\langle \psi |\phi \rangle|$ \cite{FS1,FS2,FS3}.
Note also that other distances have been employed~\cite{revbound}.
In essence, Eq.~(\ref{EQ_bound}) is often described as a version of time-energy uncertainty relation as the evolution time is lower bounded by the amount of energy (mean or variance) initially accessible to the system.

Here we investigate the evolution speed of two principal objects $A$ and $B$, which interact either directly or via an ancillary system $C$.
While direct interactions place no restrictions on the joint Hamiltonian $H_{AB}$, the mediated dynamics is mathematically encoded in the assumption that the tripartite Hamiltonian 
is a sum $H_{AC} + H_{BC}$ that excludes the terms coupling $A$ and $B$ directly.
Note that local Hamiltonians, i.e., $H_A$, $H_B$, and $H_C$, are already included in these general forms.
These scenarios are quite generic and applicable to ample situations.
We are interested in contrasting them and in identifying resources different than energy that play a role in speeding up the evolution.
We therefore impose the same energy constraint (the denominator in Eq.~(\ref{EQ_bound})) in both bipartite and tripartite settings.
Under this condition we show achievable minimal time required to maximally entangle principal systems starting from disentangled states.
It turns out that the mediated dynamics cannot be faster than the direct dynamics, but it can be just as fast provided that the mediator is initially correlated with the principal systems.
We show additionally, with an explicit example, that although entanglement gain between $A$ and $B$ is the desired quantity, the correlations to the mediator can remain classical at all times, see also Refs.~\cite{krisnanda2017,pal2021experimental}.
These results can be interpreted in terms of trading correlations for energy.
If one starts with an uncorrelated mediator and aims at entangling the principal systems as fast as with a correlated mediator, additional energy has to be supplied initially.
On the other hand, due to energy conservation, the same energy must be invested in order to prepare the correlated mediator, see Refs.~\cite{Huber2015,Bruschi2015,Bakh2019} for a discussion from a thermodynamic perspective.

%%%%%%%%%%%%%%%%%%%%%%%%%%%%%%%%%%%%%%%%%%%%%%%%%%%%%%%%%%%%%%%%%%%%%%%%%%%%%%%%%%%%%%%%%%%%%%%%%%%%%%%%%%%%%%%%%%

\section{Preliminaries}
Figure~\ref{FIG_setupspsp} summarises different considered generic scenarios.
We shall refer to the case of direct interactions as $\mathcal{DI}$ and split the mediated interactions into two cases
where mediator $C$ either interacts with the principal systems at all times ($\mathcal{CMI}$ for continuously mediated interactions)
or where it first interacts with $A$ and then with $B$ ($\mathcal{SMI}$ for sequentially mediated interactions).
Note that $\mathcal{SMI}$ in particular covers the case of commuting Hamiltonians $H_{AC}$ and $H_{BC}$.
We begin by explaining the energy constraints imposed on these scenarios.

\begin{figure}[h]
\includegraphics[width=0.4\textwidth]{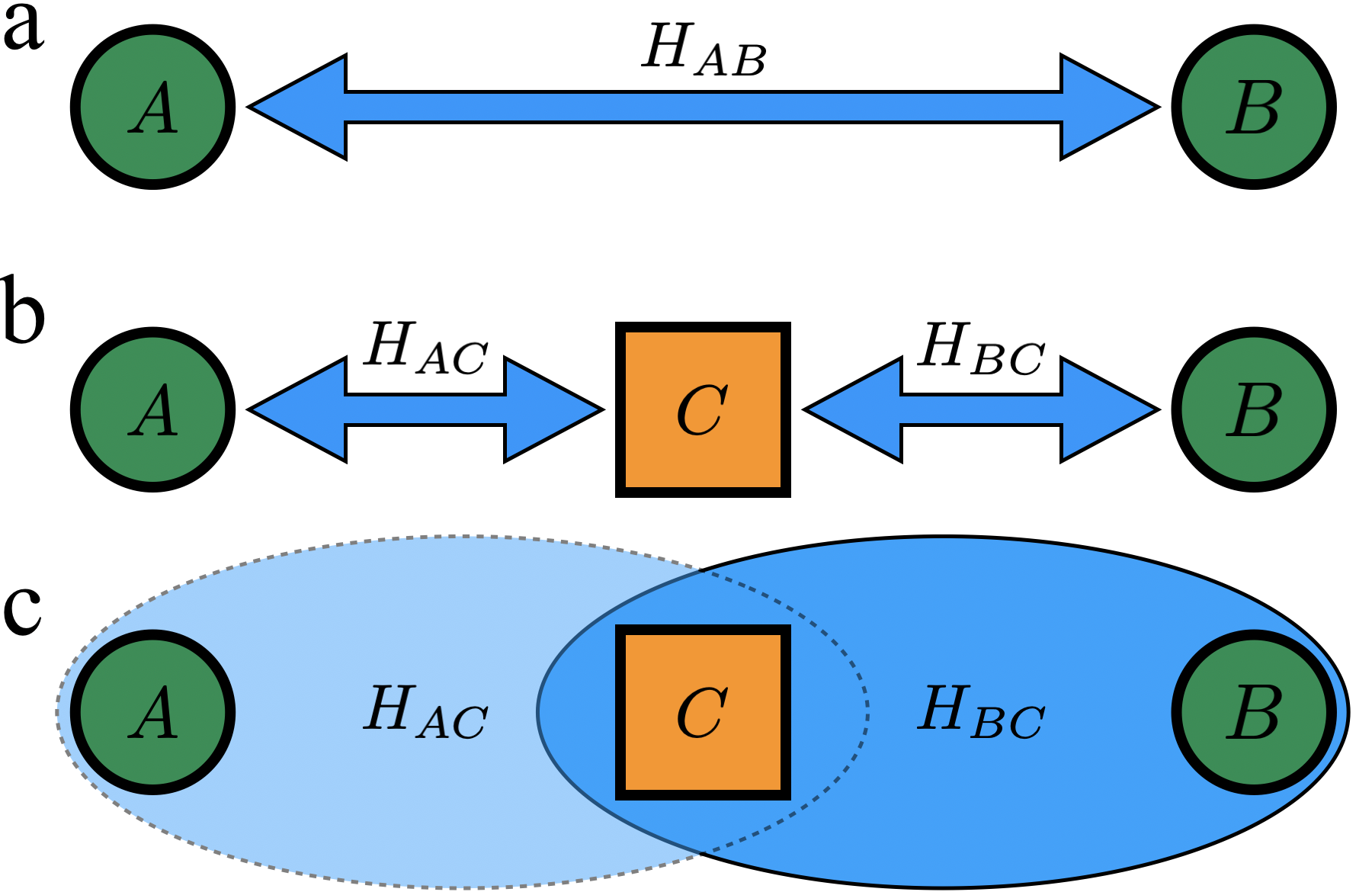}
\caption{Different considered scenarios.
The principal objects are denoted by $A$ and $B$.
Our goal is to maximally entangle them as fast as possible, starting with a disentangled initial state.
(a) Direct interactions, with Hamiltonian $H_{AB}$.
(b) Continuous mediated interactions with general Hamiltonians of the form $H_{AC}+H_{BC}$.
(c) Sequential mediated interactions where $C$ first interacts with $A$, and then with $B$.}
\label{FIG_setupspsp}
\end{figure}

Consider, for the moment, a unitary evolution of a quantum state $\rho(0)$ to $\rho_{\text{tar}}$ generated by a Hamiltonian $H$.
One can see from the unified QSL in Eq.~(\ref{EQ_bound}) that there are two relevant quantities: 
one being the fidelity $\mathcal{F}(\rho(0),\rho_{\text{tar}})$ between the initial and target state and
the other $\min \{ \langle H\rangle,\Delta H\}$, which is the minimum of the non-negative mean energy or SDE.
It is straightforward to check that scaling of the Hamiltonian, $H\rightarrow kH$, where $k$ is a constant, leads to the rescaled energy factors $\langle H\rangle \rightarrow k\langle H\rangle$ and $\Delta H\rightarrow k \Delta H$.
A trivial option to speed up the evolution of the quantum state is therefore to supply more energy, e.g., by having stronger coupling.
We wish to focus on other quantities playing a role in the speed of evolution and therefore, in what follows, we put the strength of all interactions on equal ground by setting $\min \{ \langle H\rangle,\Delta H\} = \hbar \Omega$, where $\Omega$ is a frequency unit. 
This allows us to write the unified QSL in Eq.~(\ref{EQ_bound}) as
\begin{equation}\label{EQ_dbound}
\Gamma(\rho(0),\rho_{\text{tar}})\ge \frac{\Theta(\rho(0),\rho_{\text{tar}})}{\min \{ \langle M\rangle,\Delta M\}},
\end{equation}
where $\Gamma=\Omega \tau$ stands for the dimensionless minimal time,
whereas $\langle M\rangle = \langle H \rangle / \hbar \Omega$ and $\Delta M = \Delta H / \hbar \Omega$ respectively denote the non-negative mean energy and SDE, normalised with respect to $\hbar \Omega$.
Hereafter, we assume the condition
\begin{equation}
\min \{\langle M\rangle,\Delta M\}=1,
\end{equation}
which can always be ensured with appropriate scaling $k$.
We refer to this condition as \emph{resource equality}.

To quantify the amount of entanglement, we use negativity, which is a well known computable entanglement monotone~\cite{neg98,neg99,negativity,neg2000,neg2005}.
Negativity is defined as the sum of negative eigenvalues after the state of a bipartite system is partially transposed.
The bipartite entanglement between objects $X$ and $Y$ is denoted by $N_{X:Y}$ and admits maximum value $(d-1)/2$, where $d=\min\{d_X,d_Y\}$ and $d_X$ ($d_Y$) is the dimension of object $X$ ($Y$).
For simplicity, we shall assume that the principal objects have the same dimension.
Maximally entangled states, for any entanglement monotone~\cite{streltsov2012general}, are given by pure states of the form
\begin{equation}
|\Psi_{XY} \rangle = \frac{1}{\sqrt{d}} \sum_{j = 0}^{d-1} |x_j \rangle |y_j \rangle,
\label{EQ_sp_msent}
\end{equation}
where $\{| x_j \rangle\}$ and $\{| y_j \rangle\}$ are orthonormal bases for object $X$ and $Y$, respectively.

%%%%%%%%%%%%%%%%%%%%%%%%%%%%%%%%%%%%%%%%%%%%%%%%%%%%%%%%%%%%%%%%%%%%%%%%%%%%%%%%%%%%%%%%%%%%%%%%%%%%%%%%%%%%%%%%%%

\section{Direct interactions}
Let us begin with optimal entangling dynamics for any dimension $d$, with direct interactions.
Since the initial state we take is disentangled, it has to be a pure product state as the dynamics is purity preserving and the final maximally entangled state is pure, see Eq.~(\ref{EQ_sp_msent}).
One easily verifies with the help of Cauchy-Schwarz inequality that the fidelity between a product state and maximally entangled state is bounded as 
$\mathcal{F} = \langle \alpha \beta |\Psi_{AB}\rangle \le 1 / \sqrt{d}$.
From the resource equality, the time to maximally entangle two systems via direct interactions follows
\begin{equation}\label{EQ_DITB}
\Gamma_{\mathcal{DI}}\ge \arccos(\mathcal{F})\ge \arccos\left(1/\sqrt{d}\right).
\end{equation}
This bound is tight and can be achieved with the following exemplary dynamics.
Under an initial state of $|00\rangle$, we take an optimal (to be shown below) Hamiltonian
\begin{equation}\label{EQ_xx}
H_{AB} = \frac{\hbar \Omega}{2\sqrt{d-1}}\: \sum_{j=1}^{d-1} (X_A^{j} + Y_A^{j}) \otimes (X_B^{j} + Y_B^{j}),
\end{equation}
where the subscripts indicate the corresponding system and we have defined $X^{j} \equiv |0\rangle \langle j|+|j\rangle \langle 0|$ and $Y^{j}\equiv -i|0\rangle \langle j|+i|j\rangle \langle 0|$.
Note that the constant factor ensures the resource equality.
%It is straightforward to generalise the above example to initial product state $|\alpha \beta \rangle$, with a change of basis for the operators in the Hamiltonian.
One can show that the state at time $t$ takes the form $\ket{\psi_{AB}(t)}=\cos(\Omega t)\ket{00}+\sin(\Omega t)(\sum_{j=1}^{d-1} \ket{jj})/\sqrt{d-1}$, 
and therefore it oscillates between the disentangled state $\ket{00}$ and a maximally entangled state $|\Psi_{AB} \rangle$.
The latter is achieved earliest at time $T\equiv \Omega t=\arccos(1/\sqrt{d})$, see Fig.~\ref{FIG_sp_exp1}.

\begin{figure}[h]
\centering
\includegraphics[width=0.45\textwidth]{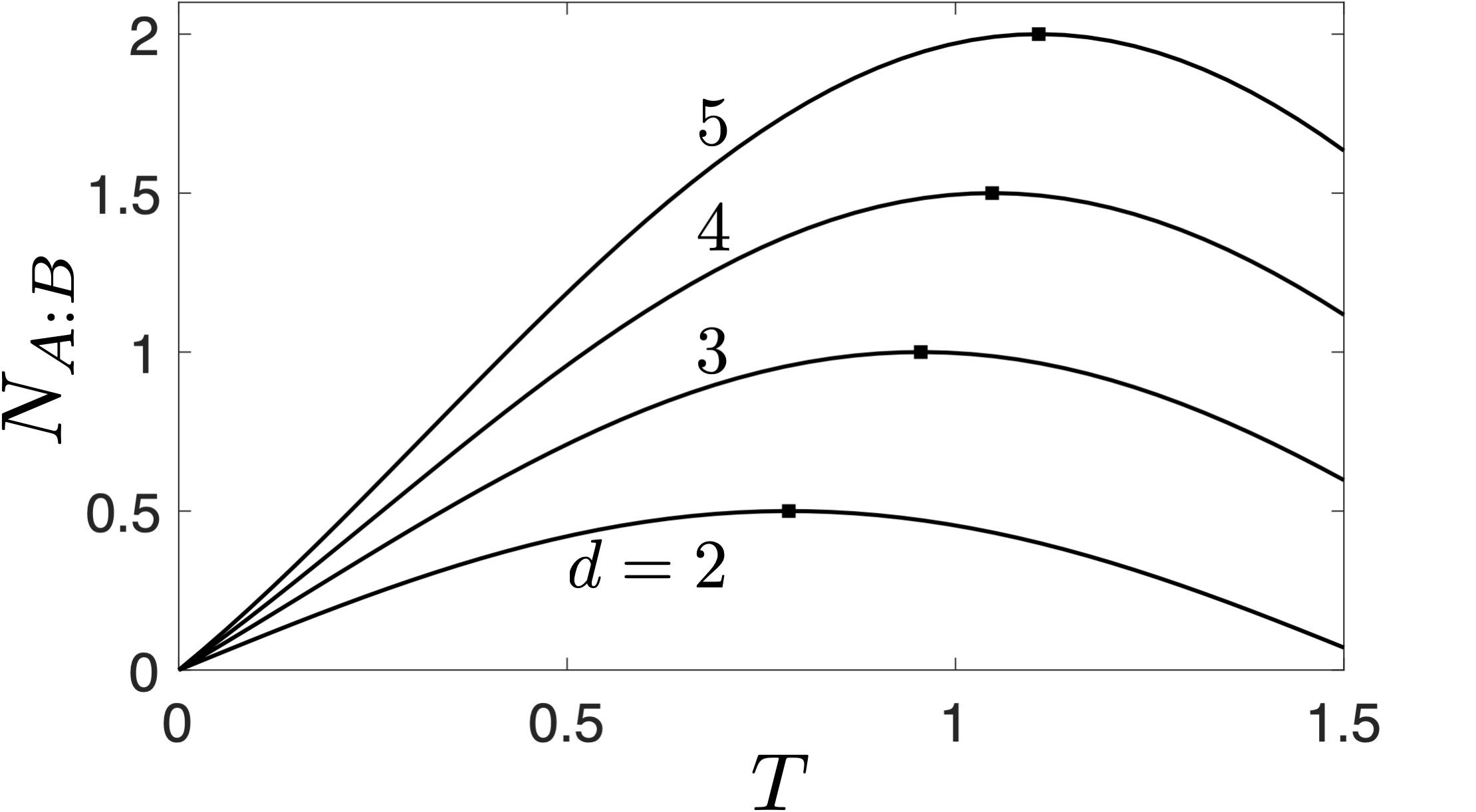}
\caption{Optimal direct dynamics showing maximum entangling speed between two objects, each with dimension $d$. 
Maximum entanglement, $(d-1)/2$, is achieved at $T=\arccos(1/\sqrt{d})$, indicated by the dots.
}
\label{FIG_sp_exp1}
\end{figure}

Alternatively, the optimality of this dynamics can be understood from the triangle inequality of the Bures angle~\cite{nielsenchuang}:
$\Theta(0,T)+\Theta(T,\arccos(1/\sqrt{d}))\ge \Theta(0,\arccos(1/\sqrt{d}))$, where we have used a short notation $\Theta(T_1,T_2)\equiv \Theta(\rho(T_1),\rho(T_2))$.
Under the resource equality, the optimal time should be equal to the Bures angle.
Indeed this is the case for the above dynamics as $\Theta(T_1,T_2) = T_2 - T_1$, \emph{saturating} the triangle inequality.
Therefore, not only the maximally entangled state is reached in the shortest time, but also all intermediate states as well.
%In other words, the negativity curves for arbitrary dynamics (up to maximally entangling time) all lie below the ones presented in Fig.~\ref{FIG_sp_exp1}. We shall return to this point later.

The described fastest entangling dynamics has the following special features.
(i) The Bures angle between any two states in the dynamics is proportional to entanglement gain,
so that QSL directly translates to the limits on entanglement generation.
(ii) This generation has its origin in components $(\sum_{j=1}^{d-1} \ket{jj})/\sqrt{d-1}$ and the high entangling speed comes from the fact 
that already the linear term in the expansion of the evolution operator $\exp{(-i \Delta t H_{AB}/\hbar)}$ introduces these components.
That is, the rate of change of entanglement is strictly positive $\dot N_{A:B}(t)>0$, for all times up to maximally entangling time.

%%%%%%%%%%%%%%%%%%%%%%%%%%%%%%%%%%%%%%%%%%%%%%%%%%%%%%%%%%%%%%%%%%%%%%%%%%%%%%%%%%%%%%%%%%%%%%%%%%%%%%%%%%%%%%%%%%

\section{Can mediator speed up entangling process?}
At first sight, one might wonder whether the use of quantum mechanical mediator could speed up the evolution by utilising non-commuting Hamiltonians, as revealed through the Baker-Campbell-Hausdorff (BCH) formula. Namely, the dynamics generated by direct coupling $H_{AB} = A \otimes B$ could be reconstructed through the mediator system $C$ interacting via $H_{AC} + H_{BC} = A \otimes p_C + x_C \otimes B$, where $x_C$ and $p_C$ are the position and momentum operators acting on the mediator.
Due to the canonical commutation relation the BCH equation reduces to:
\begin{eqnarray}
e^{-it(A \otimes p_C + x_C \otimes B)/ \hbar} & = &
e^{-it A \otimes p_C/ \hbar} \, 
e^{-it x_C \otimes B/ \hbar} \nonumber \\
& & e^{-it^2 A \otimes B/ 2 \hbar }.
\end{eqnarray}
Effective direct coupling is now identified in the last term on the right-hand side. 
Since the corresponding exponent is proportional to squared time, it is legitimate and interesting to enquire about the speeding up possibility.

On the other hand, the special features described at the end of the previous section make it unlikely that any other dynamics is faster than the fastest direct one.
Indeed, this is shown in Theorem~\ref{TH_untimate} presented in the Appendix.
Any dynamics (direct or mediated) that starts with disentangled principal systems
can maximally entangle them in time lower bounded as
\begin{equation}
\Gamma_{\text{any}} \ge \arccos \left( 1/\sqrt{d} \right),
\end{equation}
where the resource equality is assumed.
One then wonders whether mediated dynamics can achieve the same speed as the direct one.
At this stage initial correlations with the mediator enter the picture.

We shall show that if the mediator is initially completely uncorrelated from the principal systems, the time required to reach the maximally entangled state is \emph{strictly} larger than $\arccos(1/\sqrt{d})$.
Then we provide explicit examples of mediated dynamics, with initially correlated mediators, that achieve the shortest possible entangling time.

Consider the initial tripartite state of the form $\rho(0) = \rho_{AB} \otimes \rho_C$ (with separable $\rho_{AB}$) and, to give a vivid illustration first, take a Hamiltonian $H_{AC} + H_{BC} = (H_A + H_B) \otimes H_C$,
or any commuting Hamiltonians for which one can identify common eigenbasis $\{ |c \rangle \}$.
Let us take a specific product state $\ket{\alpha \beta \gamma}$ in the decomposition of the initial state $\rho(0)$, and write $\ket{\gamma} = \sum_c \lambda_c \ket{c}$.
Since $[ H_{AC}, H_{BC}] = 0$ the evolution is mathematically equivalent to $U_{BC} U_{AC}=\exp(-i t H_{BC}/\hbar)\exp(-i t H_{AC}/\hbar)$ and the initial product state evolves to 
$\ket{\psi(t)} = \sum_c \lambda_c |\alpha_c(t) \rangle |\beta_c(t) \rangle | c \rangle$,
where $|\alpha_c(t) \rangle = \exp(-i t E_c H_A/\hbar) \ket{\alpha}$ and $|\beta_c(t) \rangle = \exp(-i t E_c H_B/\hbar) \ket{\beta}$ with the corresponding eigenvalue $E_c$ of the Hamiltonian $H_C$.
By tracing out system $C$ we note that the state of $AB$ is a mixture of product states and hence not entangled.
Application of this argument to all the product states in the decomposition of $\rho(0)$ shows that this evolution cannot generate any entanglement between the principal systems whatsoever, i.e., $\Gamma_{\mathcal{CMI}} =\infty$ in this case.
This stark contrast with the QSL comes from the fact that the Bures angle is no longer related to the amount of entanglement in the subsystem $AB$.

Consider now a general Hamiltonian $H_{AC} + H_{BC}$.
In Theorem~\ref{TH_rate} presented in the Appendix we show that starting with $\rho(0) = \rho_{AB} \otimes \rho_C$ the mediated dynamics has non-positive entanglement rate at time $t=0$, 
i.e., $\dot N_{A:B}(0) \le 0$ if the three systems are open to their local environments and $\dot N_{A:B}(0) = 0$ for any closed mediated tripartite system.
This delay is causing a departure from the optimal entangling path and cannot be compensated in the future.
We show rigorously in Theorem~\ref{TH_CMI_strict} presented in the Appendix that starting with an uncorrelated mediator, i.e., $\rho(0) = \rho_{AB} \otimes \rho_C$ the time required to maximally entangle $A$ and $B$ via $\mathcal{CMI}$ satisfies a strict bound
\begin{equation}
    \Gamma_{\mathcal{CMI}}>\arccos{(1/\sqrt{d})}.
\end{equation}

Furthermore, we have performed numerical checks with random initial states and Hamiltonians (see the Appendix for details) and conjecture that the actual time to maximally entangle the principal systems with initially uncorrelated mediator is $\Gamma_{\mathrm{conj}} \ge 2 \arccos(1/\sqrt{d})$.
The following two examples with three quantum bits shed light on the origin of this hypothetical lower bound.
As initial state, consider $\ket{000}$, in the order $ABC$, and first take a Hamiltonian $H = \hbar \Omega (X_A Y_C + Y_B X_C)/\sqrt{2}$, where $X$ and $Y$ denote Pauli operators for the respective qubits.
One verifies that the resource equality holds and the state at time $t$ reads $\ket{\psi(t)} = \cos(\Omega t) \ket{000} + \sin(\Omega t) |\psi^+ \rangle \ket{1}$, where $|\psi^+ \rangle = (\ket{01} + \ket{10})/\sqrt{2}$ is the Bell state.
The maximally entangled state is obtained at time $\Omega t = \pi / 2$ because one has to wait until the dynamics completely erases the $\ket{000}$ component.
In contradistinction, the direct dynamics introduces $\ket{11}$ component (already in linear time $\Delta t$) and hence the evolution can stop at $\Omega t = \pi / 4$.
Another natural way to entangle two systems via mediator is to entangle the mediator with one of the systems first and then swap this entanglement.
Each of these processes takes time at least $\arccos(1/\sqrt{d})$ and hence again we arrive at the bound anticipated above (the swapping step actually takes a bit longer, see Appendix~\ref{APP_SMI}).
A rigorous proof of this bound is left as an open problem.

We finally give examples of mediated dynamics, starting with a correlated mediator, that entangles as fast as the fastest direct dynamics.
One may think of utilising an extreme option where the dynamics is initialised with a maximally entangled mediator.
This is indeed possible but it is also possible to utilise purely classical correlations with the mediator.
Let us start with the entangled mediator first.
Consider three qubits with an initial state and the Hamiltonian written as
\begin{eqnarray}\label{EQ_expone}
\ket{\psi(0)}&=&\frac{1}{\sqrt{2}}(\ket{000}+\ket{111}),\nonumber \\
H&=&\frac{\hbar \Omega}{2\sqrt{2}}(Z_A \otimes H_{C_1} + Z_B \otimes H_{C_2}),
\end{eqnarray}
where $H_{C_1}=-(\openone + X_C + Y_C + Z_C)$ and $H_{C_2} = \openone - X_C - Y_C + Z_C$.
The principal system is initially disentangled but the mediator is maximally entangled with the rest of the systems, $N_{AB:C}(0) = 1/2$.
One verifies that $N_{A:B}$ follows the curve for $d=2$ in Fig.~\ref{FIG_sp_exp1}.

As mentioned, quantum correlations to the mediator are not necessary.
Consider the following example:
\begin{eqnarray}\label{EQ_exp3}
\rho(0)&=&\frac{1}{2}\ket{\psi_m}\bra{\psi_m}\otimes \ket{0}\bra{0}+\frac{1}{2}|\tilde \psi_m\rangle \langle \tilde \psi_m|\otimes \ket{1}\bra{1},\nonumber \\
H&=&\frac{\hbar \Omega}{2}(Z_A \otimes Z_C + Z_B \otimes Z_C),
\end{eqnarray}
where $\ket{\psi_m}=(\ket{+-}+\ket{-+})/\sqrt{2}$ and $|\tilde \psi_m\rangle=(\ket{--}+\ket{++})/\sqrt{2}$ are two Bell-like states of $AB$ with $|\pm \rangle=(|0\rangle \pm |1\rangle)/\sqrt{2}$.
This example is similar to those in Refs.~\cite{krisnanda2017,pal2021experimental} used to demonstrate entanglement localisation via classical mediators and to indicate that controlled quantum teleportation can be realised without genuine multipartite entanglement~\cite{Barasinski2018}.
Note that initially the principal system is disentangled (an even mixture of Bell states) and this time the mediator is only classically correlated --- its states flag in which maximally entangled state is the principal system~\cite{flags}.
Furthermore, Hamiltonians $H_{AC}$ and $H_{BC}$ in Eq.~(\ref{EQ_exp3}) commute, with the common $Z$ eigenbasis, and hence in the absence of initial correlations with the mediator entanglement in the principal system would be impossible.
One can now verify via standard computations that the dynamics of $N_{A:B}$ resulting from Eq.~(\ref{EQ_exp3}) is the same as in Fig.~\ref{FIG_sp_exp1} for $d=2$.
Note that the states of the mediator are the eigenstates of $H$ and hence they are stationary.
Accordingly, only the Bell-like states evolve in time.
It has been shown recently in a general case of $\mathcal{CMI}$ where the state contains only classical correlations in the partition $AB:C$ at all times, 
that the entanglement gain, quantified by the relative entropy of entanglement~\cite{REEmutualinfo}, is bounded by the initial mutual information, i.e., $E_{A:B}(t)-E_{A:B}(0)\le I_{AB:C}(0)$~\cite{pal2021experimental}.
In the particular example of Eq.~(\ref{EQ_exp3}) this bound is achieved as we initially have $I_{AB:C}(0) = 1$ and $E_{A:B}(0) = 0$ which get converted to maximal entanglement $E_{A:B}(T) = 1$.
More generally, for the task discussed here an immediate strategy is to start with at least $I_{AB:C}(0)$ equal to the entanglement $E_{A:B}$ of the target state $|\Psi_{AB}\rangle$.

%%%%%%%%%%%%%%%%%%%%%%%%%%%%%%%%%%%%%%%%%%%%%%%%%%%%%%%%%%%%%%%%%%%%%%%%%%%%%%%%%%%%%%%%%%%%%%%%%%%%%%%%%%%%%%%%%%

\section{Sequential mediated interactions}
At last we move to the $\mathcal{SMI}$ scenario, where system $C$ first interacts only with $A$ and then only with $B$.
This setting was studied to some degree in Ref.~\cite{krisnanda2018detecting} where, in the present context, it was found that in order to prepare a maximally entangled state between the principal systems the dimension of $C$ has to be at least $d$.
We therefore set it to $d$ and take as initial state $\rho(0) = \rho_{AB} \otimes \rho_C$.
Under these conditions Theorem~\ref{TH_SMI} in the Appendix shows the following lower bound on the entangling time:
\begin{equation}
    \Gamma_{\mathcal{SMI}}\ge \arccos(1/\sqrt{d})+\arccos(1/d).
\end{equation}
Our numerical simulations indicate that this bound is tight.
Note that this is even longer than $2 \arccos(1/\sqrt{d})$ already demonstrated to be achievable with $\mathcal{CMI}$.

%%%%%%%%%%%%%%%%%%%%%%%%%%%%%%%%%%%%%%%%%%%%%%%%%%%%%%%%%%%%%%%%%%%%%%%%%%%%%%%%%%%%%%%%%%%%%%%%%%%%%%%%%%%%%%%%%%

\section{Discussion}
We wish to conclude with a few comments on the obtained results.
Since a maximally entangled state $|\Psi_{AB}\rangle$ is pure and the direct closed dynamics preserves the purity, the maximal entanglement cannot be achieved via direct coupling if one starts with a mixed state.
After introducing an ancillary system, the reduced $AB$ dynamics is, in general, not unitary and hence the purity of $\rho_{AB}$ may change.
For a concrete example see below Eq.~(\ref{EQ_exp3}), where the initial purity of $1/2$ is increased to $1$ while the disentangled initial state becomes maximally entangled.
Therefore, for states of $AB$ that are initially mixed, the only way to achieve maximum entanglement and saturate the time bound of $\mathcal{DI}$ is to make use of a correlated mediator.

Having said this, a possibility emerges to maximally entangle initially mixed principal systems by opening just one of them to a correlated local environment.
This is reasonable because the incoherent evolution may increase the purity of $\rho_{AB}$ and previously established entanglement with the environment can flow to the principal systems.
A simple example is as follows.
Suppose $A$ and $B$ are qubits and only qubit $A$ interacts with its single-qubit environment $C$.
As the initial state, we take the one in Eq.~(\ref{EQ_exp3}) and consider a Hamiltonian $H = \hbar \Omega \: Z_A \otimes Z_C$ for the local interaction with environment.
One verifies that the resulting dynamics gives the same entanglement $N_{A:B}$ as in Fig.~\ref{FIG_sp_exp1} for $d=2$.

The last example is interesting from the point of view of open quantum systems.
Note that the mutual information in the principal system grows from the initial value $I_{A:B}(0) = 1$ to the final value $I_{A:B}(\pi/4) = 2$.
Yet, subsystem $B$ has not been operated on --- only system $A$ interacts with its local environment.
One therefore asks what happens to the data processing inequality stating that information can only decay under local operations~\cite{nielsenchuang}.
The answer is that the inequality is derived for local maps which are completely positive and trace preserving.
Accordingly, the example just given is likely one of the simplest of non-completely-positive dynamics.
Violation of data processing inequality has already been discussed as a witness of such forms of evolution~\cite{Buscemi2014}.

Our main result shows that correlations play a similar role to energy in speeding up dynamics.
In tripartite mediated system A-C-B, where principal systems $A$ and $B$ are coupled via mediator $C$, it takes strictly longer to maximally entangle $AB$ when the evolution is initialised with uncorrelated mediator than when it begins with a correlated mediator.
We conjecture that the required minimal time for the case of uncorrelated mediator is twice as long.
In other words, if one would like to start with an uncorrelated mediator and reach a maximally entangled state at the same time as with a correlated mediator, one has to supply twice as much energy.

\begin{acknowledgements}
We thank Felix Binder and Varun Narasimhachar for stimulating discussions.
T.K. and T.C.H.L. acknowledge the support from the Ministry of Education (Singapore) project T2EP50121-0006.
C.N. was supported by the National Research Foundation of Korea (NRF) grant funded by the Korea government (MSIT) (NRF-2022R1F1A1063053).
J.K. was supported by KIAS Individual Grants (CG014604) at Korea Institute for Advanced Study.
A.S. was supported by the ``Quantum Coherence and Entanglement for Quantum Technology'' project, carried out within the First Team programme of the Foundation for Polish Science co-financed by the European Union under the European Regional Development Fund.
T.P. was supported by the Polish National Agency for Academic Exchange NAWA Project No. PPN/PPO/2018/1/00007/U/00001.
\end{acknowledgements}

\appendix

\section{No speeding up with mediators}

\begin{theorem}\label{TH_untimate}
Consider dynamics described by a Hamiltonian $H$, involving three objects $A$, $B$, and $C$ (direct or mediated).
For initial states $\rho(0)=\rho_{ABC}$, having disentangled $\rho_{AB}$, the lower bound on the time required to maximally entangle $AB$ satisfies
\begin{equation}
\Gamma_{\mathrm{any}}\ge \arccos \left( 1/\sqrt{d} \right),
\end{equation}
where the resource equality is assumed.
\end{theorem}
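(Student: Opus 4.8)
The plan is to combine the unified QSL under the resource equality with two monotonicity properties: that the Uhlmann root fidelity cannot decrease under a partial trace, and that the overlap of a separable state with a maximally entangled state is small. First I would specialise Eq.~(\ref{EQ_dbound}) to the full tripartite evolution. With $\min\{\langle M\rangle,\Delta M\}=1$ the QSL reads $\Gamma_{\mathrm{any}}\ge\Theta(\rho_{ABC}(0),\rho_{ABC}(T))=\arccos\mathcal{F}(\rho_{ABC}(0),\rho_{ABC}(T))$, where $T$ is the entangling time and $\rho_{ABC}(T)$ is the state reached from $\rho_{ABC}(0)$ under the unitary generated by $H$. Because $\arccos$ is decreasing, it is enough to prove $\mathcal{F}(\rho_{ABC}(0),\rho_{ABC}(T))\le 1/\sqrt{d}$.

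The crucial step is to discard the mediator. Since the partial trace over $C$ is a completely positive trace preserving map and the root fidelity is monotone non-decreasing under such maps, $\mathcal{F}(\rho_{ABC}(0),\rho_{ABC}(T))\le\mathcal{F}(\rho_{AB}(0),\rho_{AB}(T))$ with $\rho_{AB}(t)=\Tr_C\rho_{ABC}(t)$. This removes all reference to $C$ and to any initial correlations, reducing the general (possibly mixed and correlated) tripartite problem to a statement about the reduced $AB$ pair. As maximal entanglement for any monotone is attained only on the pure states of Eq.~(\ref{EQ_sp_msent}), the target reduces to $\rho_{AB}(T)=\proj{\Psi_{AB}}$, and for a pure second argument the fidelity simplifies to $\mathcal{F}(\rho_{AB}(0),\rho_{AB}(T))=\sqrt{\bra{\Psi_{AB}}\rho_{AB}(0)\ket{\Psi_{AB}}}$.

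Finally I would bound the overlap using separability of $\rho_{AB}(0)$. Expanding it as a convex mixture of pure product states, $\rho_{AB}(0)=\sum_k p_k \ket{\alpha_k\beta_k}\bra{\alpha_k\beta_k}$, and applying the Cauchy--Schwarz estimate $|\braket{\alpha_k\beta_k}{\Psi_{AB}}|^2\le 1/d$ already used in the direct case gives $\bra{\Psi_{AB}}\rho_{AB}(0)\ket{\Psi_{AB}}=\sum_k p_k|\braket{\alpha_k\beta_k}{\Psi_{AB}}|^2\le 1/d$. Chaining the inequalities yields $\mathcal{F}(\rho_{ABC}(0),\rho_{ABC}(T))\le 1/\sqrt{d}$ and therefore $\Gamma_{\mathrm{any}}\ge\arccos(1/\sqrt{d})$.

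The difficulty is conceptual rather than computational. The initial state is an arbitrary tripartite state with only $\rho_{AB}$ required to be disentangled, so the full-state Bures angle entering the QSL depends in principle on the mediator and on its correlations with $A$ and $B$, and a direct estimate of that angle seems hopeless. The step that unlocks the argument is the observation that tracing out $C$ can only increase the fidelity, so the full-state Bures angle dominates the reduced one, after which the bound coincides with the direct-interaction computation. The only point to verify carefully is that the reduced target is genuinely pure, which is exactly what guarantees the collapse of the fidelity to a single overlap.
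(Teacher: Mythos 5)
Your proposal is correct and takes essentially the same route as the paper's proof: both apply the unified QSL under resource equality, use monotonicity of the Uhlmann fidelity under the partial trace over $C$, and bound the overlap of the separable $\rho_{AB}(0)$ with $\ket{\Psi_{AB}}$ by $1/\sqrt{d}$ via the Cauchy--Schwarz inequality. The only cosmetic difference is that the paper writes the target explicitly in the product form $\ket{\Psi_{AB}}\bra{\Psi_{AB}}\otimes\rho_C$, whereas you invoke only the purity of the reduced target state; this changes nothing in substance.
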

\begin{proof}
In the target state the principal systems are maximally entangled, which implies that their state is pure and uncorrelated with the mediator $C$, i.e., $\rho_{\text{tar}}=|\Psi_{AB}\rangle \langle \Psi_{AB}|\otimes \rho_C$.
We evaluate the fidelity of the initial and target states:
\begin{eqnarray}\label{EQ_utb}
\mathcal{F}(\rho(0),\rho_{\text{tar}})&=&\mathcal{F}(\rho_{ABC},\ket{\Psi_{AB}}\bra{\Psi_{AB}}\otimes \rho_C)\nonumber \\
&\le&\mathcal{F}(\rho_{AB},\ket{\Psi_{AB}}\bra{\Psi_{AB}}) \nonumber \\
%&=&\sqrt{\langle \Psi_{\text{max}} |\rho_{AB}|\Psi_{\text{max}}}\rangle \nonumber \\
&\le&\max_{p_j,\ket{a_jb_j}} \sqrt{\sum_j p_j |\langle a_jb_j|\Psi_{AB}\rangle|^2} \nonumber \\
&\le&\max_{\ket{a_jb_j}}  |\langle a_jb_j|\Psi_{AB}\rangle| =\frac{1}{\sqrt{d}},
\end{eqnarray}
where the steps are justified as follows.
The first inequality is due to monotonicity of fidelity under trace-preserving completely positive maps~\cite{nielsen1996entanglement} (here, tracing out $C$).
Then we expressed the disentangled state as $\rho_{AB}=\sum_j p_j\:\ket{a_jb_j}\bra{a_jb_j}$ and used its convexity properties.
The final equation follows from the form of maximally entangled state.
Finally, by having the resource equality, one gets $\Gamma_{\mathrm{any}}\ge \arccos\left({\mathcal{F}(\rho(0),\rho_{\text{tar}})}\right) \ge \arccos{(1/\sqrt{d})}$.
\end{proof}

\section{No initial entanglement gain with uncorrelated mediator}

\begin{theorem}\label{TH_rate}
Consider the case of $\mathcal{CMI}$, where all objects can be open to their own local environments (for generality).
For initial states where the mediator is uncorrelated, i.e., $\rho(0)=\rho_{AB}\otimes \rho_C$, the rate of any entanglement monotone follows $\dot E_{A:B}(0) \le 0$.
\end{theorem}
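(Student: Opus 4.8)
The plan is to reduce the claim to a statement about short-time maps: I want to show that, to first order in time, the reduced evolution of the principal pair $AB$ is generated by a sum of one term acting only on $A$ and one acting only on $B$, so that the induced infinitesimal map is a \emph{product} of local channels. Monotonicity of $E_{A:B}$ under local operations then immediately forces $\dot E_{A:B}(0)\le 0$. The whole argument hinges on the product structure $\rho(0)=\rho_{AB}\otimes\rho_C$, which is where the absence of initial correlations with the mediator enters.

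First I would write the generator of the open $\mathcal{CMI}$ dynamics as a Lindblad master equation,
\begin{equation}
\dot\rho=-\frac{i}{\hbar}[H_{AC}+H_{BC},\rho]+\mathcal{L}_A[\rho]+\mathcal{L}_B[\rho]+\mathcal{L}_C[\rho],
\end{equation}
where $\mathcal{L}_X$ is a dissipator supported only on system $X$ (the closed case being $\mathcal{L}_A=\mathcal{L}_B=\mathcal{L}_C=0$). Evaluating at $t=0$ on $\rho(0)=\rho_{AB}\otimes\rho_C$ and tracing out the mediator, I would treat the three contributions separately. For the coherent part, expanding $H_{AC}=\sum_k A_k\otimes C_k$ (with $C_k$ on $C$ and identity on $B$) gives $\Tr_C\!\left([H_{AC},\rho_{AB}\otimes\rho_C]\right)=[\bar H_A,\rho_{AB}]$ with the Hermitian effective Hamiltonian $\bar H_A=\sum_k \Tr(C_k\rho_C)\,A_k$ acting only on $A$, and analogously a local $\bar H_B$ on $B$ from $H_{BC}$. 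The local dissipators reduce to $\Tr_C[\mathcal{L}_A[\rho_{AB}\otimes\rho_C]]=\mathcal{L}_A[\rho_{AB}]$ and likewise for $B$. The \emph{crucial} step is the mediator dissipator: since $\rho_C$ factorises out and every Lindblad generator is trace preserving, $\Tr_C[\mathcal{L}_C[\rho_{AB}\otimes\rho_C]]=\rho_{AB}\,\Tr_C[\mathcal{L}_C[\rho_C]]=0$, so it drops out entirely. This is precisely the place where uncorrelated initialisation is used, and what would fail for a correlated mediator.

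Combining these, the reduced generator at $t=0$ is $\mathcal{G}=\mathcal{G}_A+\mathcal{G}_B$, with $\mathcal{G}_A[\cdot]=-\tfrac{i}{\hbar}[\bar H_A,\cdot]+\mathcal{L}_A[\cdot]$ a genuine Lindblad generator on $A$ and $\mathcal{G}_B$ its counterpart on $B$. As they act on disjoint systems they commute, so the finite-time map factorises, $e^{\mathcal{G}\,\Delta t}=\Lambda_A^{\Delta t}\otimes\Lambda_B^{\Delta t}$, a product of completely positive trace-preserving channels. Writing $\tilde\rho_{AB}(\Delta t)=(\Lambda_A^{\Delta t}\otimes\Lambda_B^{\Delta t})[\rho_{AB}]$, the true reduced state agrees with it to first order, $\rho_{AB}(\Delta t)=\tilde\rho_{AB}(\Delta t)+O(\Delta t^2)$, because both share the same value and first derivative at $t=0$. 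Monotonicity under local channels gives $E_{A:B}(\tilde\rho_{AB}(\Delta t))\le E_{A:B}(\rho_{AB})$, and I would use Lipschitz continuity of the negativity in trace norm to absorb the $O(\Delta t^2)$ discrepancy, yielding $E_{A:B}(\Delta t)\le E_{A:B}(0)+O(\Delta t^2)$ and hence $\dot E_{A:B}(0)\le 0$. In the closed case $\mathcal{L}_A=\mathcal{L}_B=0$ makes the factorised map a local unitary, which preserves $E$ exactly, upgrading the bound to $\dot E_{A:B}(0)=0$.

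The main obstacle I anticipate is regularity rather than algebra: a general entanglement monotone (including negativity) may be non-differentiable at $\rho_{AB}(0)$, so naive term-by-term differentiation of $E$ along the trajectory is not licit. The argument must therefore be carried out through the local-map image $\tilde\rho_{AB}$, invoking monotonicity for the exact local channel and then using Lipschitz control of the remainder, so that the potential kink of $E$ never needs to be differentiated. Ensuring that $\mathcal{G}_A$ and $\mathcal{G}_B$ are bona fide Lindblad generators (so that $\Lambda_A^{\Delta t},\Lambda_B^{\Delta t}$ are genuinely CPTP for all $\Delta t\ge 0$) is routine, as the effective Hamiltonians are Hermitian and the local dissipators retain Lindblad form.
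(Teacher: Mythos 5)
Your proposal is correct and follows essentially the same route as the paper's proof: trace out $C$ to first order using $\rho(0)=\rho_{AB}\otimes\rho_C$, obtain mean-field local Hamiltonians $\bar H_A,\bar H_B$ weighted by mediator expectation values plus the surviving local dissipators (the $C$ dissipator vanishing by trace preservation), and invoke monotonicity of $E_{A:B}$ under local operations. Your extra care with the $O(\Delta t^2)$ remainder --- comparing to the factorised CPTP surrogate $\Lambda_A^{\Delta t}\otimes\Lambda_B^{\Delta t}$ and absorbing the discrepancy via Lipschitz continuity --- is a refinement of a step the paper treats informally, not a genuinely different argument.
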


\begin{proof}
We take the evolution of the whole tripartite system following the Lindblad master equation to include the contribution from interactions with local environments:
\begin{eqnarray}
\frac{\rho(\Delta t)-\rho(0)}{\Delta t} & = & -i[H,\rho(0)]+\sum_{X=A,B,C}L_X\rho(0),\label{EQ_open} \\
L_X\rho(0) & \equiv & \sum_k Q^X_k\rho(0) Q^{X\dag}_k-\frac{1}{2}\{Q^{X\dag}_kQ^X_k,\rho(0)\}.  \nonumber 
\end{eqnarray}
We set $\hbar$ to unity in this proof for simplicity.
Note that the first term in the RHS of Eq.~(\ref{EQ_open}) corresponds to the coherent part of the dynamics, while the second constitutes incoherent processes from interactions with local environments, that is, the operator $Q^X_k$ only acts on system $X$.
We take the total Hamiltonian as $H=H_A\otimes H_C+H_B\otimes H_{C^{\prime}}$ without loss of generality, and note that the proof easily follows for a general Hamiltonian $H=\sum_{\mu} H_A^{\mu}\otimes H_C^{\mu}+\sum_{\nu} H_B^{\nu}\otimes H^{\nu}_{C^{\prime}}$.

Following Eq.~(\ref{EQ_open}), the state of the principal objects at $\Delta t$ reads
\begin{eqnarray}
\rho_{AB}(\Delta t)&=&\mbox{tr}_C(\rho(\Delta t))\nonumber \\
&=&\mbox{tr}_C( \rho(0)-i\Delta t [H,\rho(0)]+\Delta t \sum_{X}L_X\rho(0) )\nonumber \\
&=&\rho_{AB}-i\Delta t[H_AE_C+H_BE_{C^{\prime}},\rho_{AB}]\nonumber \\
&&+\Delta t(L_A+L_B)\rho_{AB}, \label{EQ_rabdt}
\end{eqnarray}
where $E_C=\mbox{tr}(H_C\rho_C)$ and $E_{C^{\prime}}=\mbox{tr}(H_{C^{\prime}}\rho_C)$ denote the initial mean energies, and we have used $\rho(0)=\rho_{AB}\otimes \rho_C$. 
Also, $\mbox{tr}_C(Q^C_k\rho_C Q^{C\dag}_k-\frac{1}{2}\{Q^{C\dag}_kQ^C_k,\rho_C\})=0$ follows from the cyclic property of trace. 

Effectively, the evolution of the principal objects leading to $\rho_{AB}(\Delta t)$, as written in Eq.~(\ref{EQ_rabdt}), consists of local Hamiltonians weighted by the corresponding mean energies $H_AE_C+H_B E_{C^{\prime}}$,
and interactions with respective local environments.
Therefore, for any entanglement monotone, a measure that is non-increasing under local operations and classical communication, one concludes that $E_{A:B}(\Delta t)\le E_{A:B}(0)$, and hence, $\dot E_{A:B}(0)\le 0$.
In particular, this holds for negativity used in the main text.
\end{proof}

Unitary dynamics is a special case of Theorem~\ref{TH_rate} without incoherent interactions with local environments.
Since entanglement monotones are invariant under local unitary operations $E_{A:B}(\Delta t)= E_{A:B}(0)$ or $\dot E_{A:B}(0)= 0$.
As a consequence, changes in entanglement between the principal objects (positive or negative) are only possible if the mediator $C$ is correlated with them.

By applying this argument to the final state $| \Psi_{AB} \rangle \langle \Psi_{AB}| \otimes \rho_C$ and backwards in time,
we conclude that any dynamics (direct or mediated) approaches the final state at a rate $\dot E_{A:B}(T)= 0$, clearly seen in Fig.~\ref{FIG_sp_exp1}.

\section{Strict bound for uncorrelated mediator}

We revisit the condition where $C$ is initially uncorrelated, i.e., $\rho(0)=\rho_{AB}\otimes \rho_C$, which is a special case of Theorem~\ref{TH_untimate}.
In this case, we have
\begin{equation}\label{EQ_tbspecial}
\mathcal{F}(\rho(0),\rho_{\text{tar}})=\mathcal{F}(\rho_{AB},\ket{\Psi_{AB}}\bra{\Psi_{AB}})\:\mathcal{F}(\rho_{C},\rho_C^{\prime}),
%&&\text{tr} \left( \sqrt{\sqrt{\rho_{AB}\rho_C} \ket{\Psi_{\text{max}}}\bra{\Psi_{\text{max}}} \rho_C^{\prime} \sqrt{\rho_{AB}\rho_C}  }\right) \nonumber \\
%=\text{tr} \left( \sqrt{\sqrt{\rho_{AB}} \ket{\Psi_{\text{max}}}\bra{\Psi_{\text{max}}}\sqrt{\rho_{AB}}  \sqrt{\rho_C} \rho_C^{\prime}\sqrt{\rho_C}}\right) &&\nonumber \\
%&=&\text{tr} \left( \sqrt{\sqrt{\rho_{AB}} \ket{\Psi_{\text{max}}}\bra{\Psi_{\text{max}}}\sqrt{\rho_{AB}} }\right) \text{tr} \left( \sqrt{\sqrt{\rho_C} \rho_C^{\prime}\sqrt{\rho_C}}\right)\nonumber \\
%&&=\mathcal{F}(\rho_{AB},\ket{\Psi_{\text{max}}}\bra{\Psi_{\text{max}}})\:\mathcal{F}(\rho_{C},\rho_C^{\prime}).
\end{equation}
%\end{widetext}
where $\rho_C^{\prime}$ is the state of $C$ in the target $\rho_{\text{tar}}$.
The only way to saturate the optimal bound of direct dynamics is to set $\mathcal{F}(\rho_{C},\rho_C^{\prime}) = 1$, i.e. $\rho_C = \rho_C^{\prime}$.
Accordingly, the initial state of $AB$ has to be in a pure product form.
%Note that the first inequality of Eq.~(\ref{EQ_utb}) is recovered as $\mathcal{F}(\rho_{C},\rho_C^{\prime})\in [0,1]$.
%Now, suppose that the initial state $\rho_{AB}$ is strictly mixed.
%As the dynamics is unitary (preserving purity), to obtain a pure target state on the principal systems, the final state of $C$ has to be strictly more mixed than its initial one, and therefore $\rho_C\ne \rho_C^{\prime}$.
%Consequently, we have $\mathcal{F}(\rho_C,\rho_C^{\prime})<1$.
%One can then follow the proof of Theorem~\ref{TH_untimate} and have a strict sign for the first inequality of Eq.~(\ref{EQ_utb}), resulting in a strict bound $\Gamma_{\mathcal{CMI}}>\arccos(1/\sqrt{d})$. 
%This implies that the possibility of saturating the optimum dynamics requires the initial state $\rho_{AB}$ to be pure and $\mathcal{F}(\rho_{C},\rho_C^{\prime})=1$, i.e., $\rho_C=\rho_C^{\prime}$.
%A trivial dynamics (via direct interactions in a tripartite setting) is given by the example presented in Section~\ref{SC_dents} with the addition of a decoupled mediator $\rho_C$ and a local Hamiltonian $H_C$.
Having this in mind, the theorem below shows that the time bound is still strict.

\begin{theorem}\label{TH_CMI_strict}
For the initial state of the form
%of $AB$ is pure without entanglement and that it is not correlated with the mediating system $C$, i.e.,
$\rho(0)=|\alpha \beta \rangle \langle \alpha \beta |\otimes \rho_C$, the time required to maximally entangle the principal systems via $\mathcal{CMI}$ follows a strict bound
\begin{equation}
    \Gamma_{\mathcal{CMI}} > \arccos(1/\sqrt{d}).
\end{equation}
\end{theorem}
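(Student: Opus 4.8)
The plan is to argue by contradiction: assume the bound is saturated, $\Gamma_{\mathcal{CMI}} = \arccos(1/\sqrt{d})$, and extract an incompatibility at first order in time. First I would record what saturation forces on the initial data. By the fidelity factorisation in Eq.~(\ref{EQ_tbspecial}), attaining the optimal value $\mathcal{F}(\rho(0),\rho_{\text{tar}}) = 1/\sqrt{d}$ requires both $\rho_C=\rho_C^{\prime}$ and that $\ket{\alpha\beta}$ be optimally aligned with the target, so that the overlap $f(t)\equiv\bra{\Psi_{AB}}\rho_{AB}(t)\ket{\Psi_{AB}}$ starts exactly at its product-state maximum, $f(0)=|\langle\Psi_{AB}|\alpha\beta\rangle|^2=1/d$.

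The core of the argument is to sandwich $f(t)$ for small $t>0$ between a lower bound with strictly positive slope and an upper bound that is flat to first order. For the lower bound I would apply the unified QSL not from the start but to the \emph{remaining} evolution from time $t$ to $T$: the energy factor is conserved under unitary dynamics, so resource equality still holds at time $t$, and the QSL gives $\mathcal{F}(\rho(t),\rho_{\text{tar}})\ge\cos(\Omega(T-t))$. Combining this with partial-trace monotonicity (tracing out $C$, using that the target marginal $\ket{\Psi_{AB}}$ is pure), $\mathcal{F}(\rho(t),\rho_{\text{tar}})\le\sqrt{f(t)}$, yields $f(t)\ge\cos^2(\Omega(T-t))$. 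Expanding about $t=0$ with $\cos^2(\Omega T)=1/d$ produces a strictly positive linear term $\tfrac{2\sqrt{d-1}}{d}\,\Omega t$.

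For the upper bound I would invoke the first-order structure of the reduced dynamics from the proof of Theorem~\ref{TH_rate}, Eq.~(\ref{EQ_rabdt}): with an uncorrelated mediator the coherent part of $\rho_{AB}(\Delta t)$ is generated by an effective \emph{local} Hamiltonian $\tilde H_L=H_A E_C+H_B E_{C'}$, so that $\rho_{AB}(t)$ agrees, up to $O(t^2)$, with the product-unitary evolution $e^{-i\tilde H_L t}\ket{\alpha\beta}\bra{\alpha\beta}e^{i\tilde H_L t}$ of the initial product state. This remains a product $\ket{\alpha(t)\beta(t)}$, and any product state overlaps $\ket{\Psi_{AB}}$ by at most $1/d$; since we begin exactly at this maximum, $g(t)=|\langle\Psi_{AB}|\alpha(t)\beta(t)\rangle|^2$ has a maximum at $t=0$ and hence vanishing derivative. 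Therefore $f(t)=g(t)+O(t^2)\le 1/d+O(t^2)$, with no linear term.

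The contradiction then closes at once: for small $t>0$ the lower bound $1/d+\tfrac{2\sqrt{d-1}}{d}\Omega t+O(t^2)$ cannot lie below the upper bound $1/d+O(t^2)$, since dividing the difference by $t$ and letting $t\to 0^+$ would force $\tfrac{2\sqrt{d-1}}{d}\Omega\le 0$, impossible for $d\ge 2$. Hence saturation is untenable and $\Gamma_{\mathcal{CMI}}>\arccos(1/\sqrt{d})$. I expect the main obstacle to be the upper-bound step: one must be certain that an uncorrelated mediator makes the reduced evolution genuinely \emph{product-preserving to first order} --- precisely the local-unitary content of Theorem~\ref{TH_rate} --- and that starting exactly at the product-state maximum of the overlap is what kills the linear term. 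The remaining ingredients (the intermediate-time QSL and the partial-trace bound) are routine once this comparison is in place; care is needed only to confirm that the $O(t^2)$ remainders never feed into the first derivative and that a general multi-term Hamiltonian reduces to the same local form $\tilde H_L$.
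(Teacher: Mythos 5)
Your proof is correct, and it takes a genuinely different route from the paper's. The paper argues geometrically: the optimal $\mathcal{DI}$ dynamics traces a straight line in Bures angle (saturating the triangle inequality) along which $\rho_{AB}$ stays pure and decoupled from $C$; Theorem~\ref{TH_rate} forces any entangling $\mathcal{CMI}$ dynamics to correlate $C$ with $AB$ at some intermediate time $t$, so the state at $t$ cannot be the decoupled pure state on that line, and the triangle inequality $\Theta(0,t)+\Theta(t,\arccos(1/\sqrt{d}))>\Theta(0,\arccos(1/\sqrt{d}))$ becomes strict. You instead run a quantitative first-order contradiction on $f(t)=\bra{\Psi_{AB}}\rho_{AB}(t)\ket{\Psi_{AB}}$: the QSL applied to the remaining segment of the evolution (legitimate, since $\langle H\rangle$ and $\Delta H$ are conserved under the unitary dynamics, so resource equality persists at time $t$) combined with partial-trace monotonicity gives $f(t)\ge\cos^2(\Omega(T-t))$, with slope $2\sqrt{d-1}\,\Omega/d>0$ at $t=0$ under the saturation hypothesis, while Eq.~(\ref{EQ_rabdt}) --- the same structural fact that underlies Theorem~\ref{TH_rate} --- shows the reduced dynamics is generated by a local Hamiltonian to first order, so $f$ starts at the product-state maximum $1/d$ and has $\dot f(0)=0$; since $f$ is analytic, the one-sided derivative from the lower bound must agree with this, which is impossible for $d\ge 2$. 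Your handling of the side conditions is also sound: if $f(0)<1/d$ or $\mathcal{F}(\rho_C,\rho_C')<1$, the $t=0$ QSL together with the factorisation Eq.~(\ref{EQ_tbspecial}) already gives strictness, and the reduction of a general $H_{AC}+H_{BC}$ to the local form $\tilde H_L$ goes through because any such Hamiltonian decomposes as $\sum_\mu H^\mu_A\otimes H^\mu_C+\sum_\nu H^\nu_B\otimes H^\nu_{C'}$, exactly as noted in the proof of Theorem~\ref{TH_rate}. What your version buys is rigor at the paper's weakest point: the paper's assertion that any other optimal dynamics must follow the same straight line is an unproved geodesic-uniqueness claim, whereas your derivative argument needs no uniqueness at all. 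What the paper's version buys is conceptual transparency --- it pinpoints the mechanism of the slowdown (the mediator must become correlated, pushing $\rho_{AB}$ off the pure-state path) --- and, by working at the level of entanglement monotones, it meshes directly with the open-system form of Theorem~\ref{TH_rate}.
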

\begin{proof}
    Recall that the dynamics identified in the $\mathcal{DI}$ case saturates the triangle inequality and is characterised by a straight line in Bures angles.
    Any other optimal dynamics (e.g. generated by other Hamiltonians) has to follow the same straight line. Along the line the states of $AB$ remain pure at all times.
    However, Theorem~\ref{TH_rate} shows that entanglement gain between $A$ and $B$ is possible only when 
    the mediating system is correlated with the principal systems at some time $t$ during the dynamics. 
    In the present case, this means that at $t$, the state of $AB$ is not pure, in particular, the mediator is not in a decoupled form $|\psi_{AB}(t)\rangle \langle \psi_{AB}(t)|\otimes \rho_C$, where $|\psi_{AB}(t)\rangle$ is the state from the optimum $\mathcal{DI}$. 
    Since $\mathcal{F}(\rho_{AB}(t),|\psi_{AB}(t) \rangle \langle \psi_{AB}(t)|) < 1$, we use the triangle inequality of the Bures angle to conclude the strict bound: 
    \begin{eqnarray}
        \Gamma_{\mathcal{CMI}}&=&\Gamma_1+\Gamma_2\nonumber \\
        &\ge& \Theta(0,t)+\Theta(t,\arccos(1/\sqrt{d}))\nonumber \\
        &>& \Theta(0,\arccos(1/\sqrt{d}))=\arccos(1/\sqrt{d}),
    \end{eqnarray}
    where $\Gamma_1$ and $\Gamma_2$ respectively denote the minimum time for evolution $0\rightarrow t$ and $t\rightarrow \arccos(1/\sqrt{d})$.
    In other words, the dynamics strictly does not follow the optimum (straight line) path, where at time $t$ the state is uniquely $|\psi_{AB}(t)\rangle \langle \psi_{AB}(t)|\otimes \rho_C$.
\end{proof}

\section{Numerical simulations for uncorrelated mediator}
Here we present results of numerical simulations behind the conjectured minimal time of $2\arccos(1/\sqrt{d})$ to maximally entangle the principal systems with initially uncorrelated mediator.
Based on the discussion prior to Theorem~\ref{TH_CMI_strict}, we consider initial states of the form $\rho(0)=|\alpha \beta \rangle \langle \alpha \beta|\otimes \rho_C$ and
Hamiltonians $H=H_{AC}+H_{BC}$ scaled to satisfy the resource equality condition.

Let us first describe the case of three qubits. 
We random the initial state, i.e., $|\alpha \rangle$, $|\beta \rangle$, and $\rho_C$ as well as the Hamiltonians $H_{AC}$ and $H_{BC}$ using the quantinf package by Toby Cubitt. 
Fig.~\ref{FIG_smi_num} presents entanglement generated by such evolutions, with $10^6$ random instances at each time.
As seen, the fastest time to reach maximum entanglement of $0.5$ is indeed $2\arccos(1/\sqrt{2})$.
We also performed simulations for the same number of random instances for three qutrits ($d=3$).
In this case, entanglement does not even come close to the maximum possible value at time $2\arccos(1/\sqrt{3})$, indicating that this is a correct lower bound on the entangling time.
\begin{figure}[h]
\centering
\includegraphics[width=0.45\textwidth]{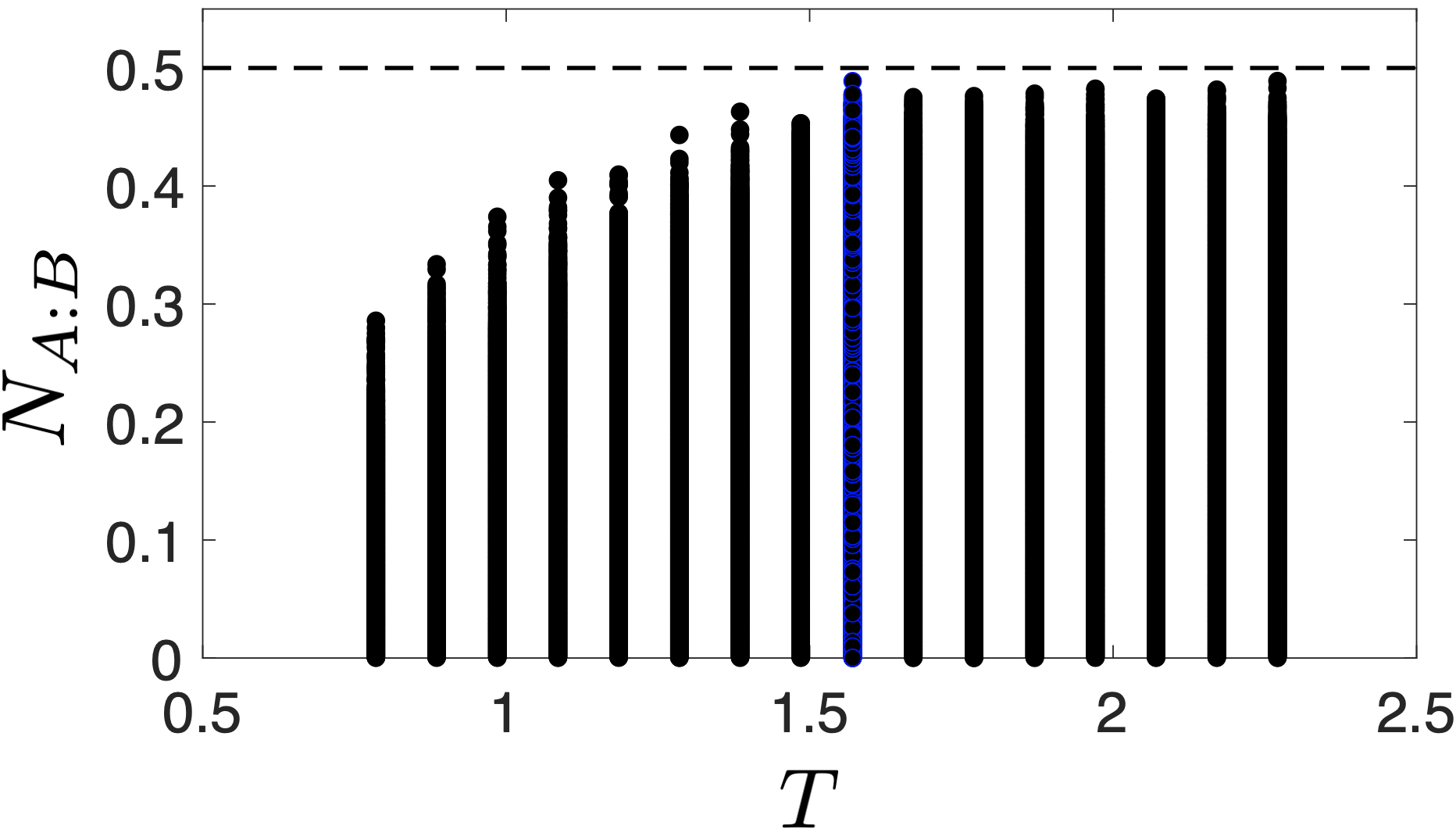}
\caption{
Numerical simulations of $\mathcal{CMI}$ with three qubits and initially uncorrelated mediator.
We generated $10^6$ random initial states and Hamiltonians for each time.
The points highlighted in blue correspond to evolution time $2\arccos(1/\sqrt{2})$.
The dashed line indicates maximum entanglement between two qubits.
}
\label{FIG_smi_num}
\end{figure}

\section{Sequential mediated dynamics}
\label{APP_SMI}

\begin{theorem}
\label{TH_SMI}
Starting with $\rho(0) = \rho_{AB} \otimes \rho_C$, maximal entanglement in AB is achieved via $\mathcal{SMI}$ in time
\begin{equation}
    \Gamma_{\mathcal{SMI}}\ge \arccos(1/\sqrt{d})+\arccos(1/d).
\end{equation}
\end{theorem}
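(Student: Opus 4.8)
The plan is to exploit the two-stage structure of $\mathcal{SMI}$: a unitary $U_{AC}$ acting only on $AC$ for a time $t_1$ (with $B$ a spectator), followed by a unitary $U_{BC}$ acting only on $BC$ for a time $t_2$ (with $A$ a spectator), so that $\Gamma_{\mathcal{SMI}}=\Gamma_1+\Gamma_2$ with $\Gamma_1=\Omega t_1$, $\Gamma_2=\Omega t_2$. With the resource equality enforced on whichever generator is active in each stage, the unified QSL of Eq.~(\ref{EQ_dbound}) applies stage by stage and gives $\Gamma_1\ge\Theta(\rho(0),\rho(t_1))$ and $\Gamma_2\ge\Theta(\rho(t_1),\rho(T))$, where $T=t_1+t_2$. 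It therefore suffices to upper bound the two fidelities, aiming at $\mathcal{F}(\rho(0),\rho(t_1))\le 1/\sqrt{d}$ and $\mathcal{F}(\rho(t_1),\rho(T))\le 1/d$, which combine into the two arccos terms.

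First I would pin down the state at the switching time $t_1$. Since $A$ is untouched in stage two, its marginal is frozen there, so $\rho_A(t_1)=\rho_A(T)=\openone/d$ (the target $\proj{\Psi_{AB}}\otimes\rho_C'$ has a maximally mixed $A$). Moreover, because $A$ and $B$ are never coupled directly, the target entanglement must be routed through the mediator: stage two is local on $BC$, hence $N_{A:BC}$ is conserved and equals its final value $N_{A:B}(T)=(d-1)/2$, so at $t_1$ the system $A$ is already maximally entangled with the rest. Crucially the mediator has dimension exactly $d$, so the $A\!:\!C$ cut is $d\times d$, and maximal negativity $(d-1)/2$ across equal dimensions is attained \emph{only} by a pure maximally entangled state. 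For a pure product input this forces $\rho(t_1)=\proj{\chi}_{AC}\otimes\proj{\beta}_B$, with $\ket{\chi}_{AC}$ maximally entangled and $B$ still carrying its unchanged spectator state $\ket{\beta}$.

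With this structure the two fidelity bounds follow from Cauchy--Schwarz, exactly in the spirit of Eq.~(\ref{EQ_DITB}). For stage one, $\mathcal{F}(\rho(0),\rho(t_1))=|\braket{\alpha\gamma}{\chi}|\le 1/\sqrt{d}$, the overlap of a product state with a maximally entangled state on $AC$, giving $\Gamma_1\ge\arccos(1/\sqrt{d})$. For stage two I would write $\ket{\chi}_{AC}=\frac{1}{\sqrt{d}}\sum_k\ket{k}_A\ket{\mu_k}_C$ with $\{\ket{\mu_k}\}$ orthonormal, and the target as $\ket{\Psi_{AB}}\ket{\phi}_C=\frac{1}{\sqrt{d}}\sum_m\ket{m}_A\ket{m}_B\ket{\phi}_C$, so that the swap overlap evaluates to $\frac{1}{d}\sum_k\braket{\beta}{k}\braket{\mu_k}{\phi}$; Cauchy--Schwarz together with normalisation of $\ket{\beta}$ and orthonormality of $\{\ket{\mu_k}\}$ bounds its modulus by $1/d$, hence $\Gamma_2\ge\arccos(1/d)$. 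Summing the two stages yields the claimed $\Gamma_{\mathcal{SMI}}\ge\arccos(1/\sqrt{d})+\arccos(1/d)$.

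The hard part will be making the switching-time structure rigorous for a general uncorrelated-mediator input $\rho(0)=\rho_{AB}\otimes\rho_C$ rather than a pure product one. When $\rho_{AB}$ is merely classically correlated, stage one produces $\rho(t_1)=\sum_i q_i\,\tau^{(i)}_{AC}\otimes\proj{b_i}_B$, and the $B$-flags could in principle contribute to $N_{A:BC}(t_1)$, so its maximality no longer collapses $\rho_{AC}(t_1)$ to a single pure maximally entangled state; a mixed spectator $\rho_B$ raises the same issue. I would dispose of this by first reducing to the pure-product case, arguing as in the discussion preceding Theorem~\ref{TH_CMI_strict} that any admissible input must effectively be $\proj{\alpha\beta}\otimes\rho_C$ with the mediator carrying maximal $A\!:\!C$ entanglement at $t_1$, and by invoking the equal-dimensional maximal-negativity-implies-purity fact on the $A\!:\!C$ cut. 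Justifying the per-stage resource equality as the fair energy constraint, and confirming the tightness, would then be backed by the numerical simulations.
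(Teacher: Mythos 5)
Your proposal is correct and follows essentially the same route as the paper's proof: you force the switching-time state to be $\ket{\Psi_{AC}}\otimes\ket{\beta}$ by exploiting that $U_{BC}$ is local across the $A\!:\!BC$ cut (the paper uses the monotone chain $E_{A:B}(\rho_f)\le E_{A:BC}(\rho_f)=E_{A:BC}(U_{AC}\,\rho(0)\,U_{AC}^\dagger)$ where you use conservation of $N_{A:BC}$ under the local unitary), and then derive the two stage fidelity bounds $1/\sqrt{d}$ and $1/d$ via the same Cauchy--Schwarz computations. The reduction to a pure product input, which you flag as the remaining hard part, is treated no more rigorously in the paper itself, which settles it with the purity argument plus the parenthetical remark that partial initial entanglement in $\rho_{AB}$ ``does not pay off.''
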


\begin{proof}
The final state has the form $\rho_f = | \Psi_{AB} \rangle \langle \Psi_{AB} | \otimes \rho_C$.
In this scenario it is to be obtained by the sequence of operations $\rho_f = U_{BC} U_{AC} \rho(0) U_{AC}^\dagger U_{BC}^\dagger$.
We start with the following argument
\begin{equation}
E_{A:B}(\rho_f) \le E_{A:BC}(\rho_f) = E_{A:BC} ( \, U_{AC} \rho(0) U_{AC}^\dagger \, )
\end{equation}
where the inequality is due to the monotonicity of entanglement under local operations (here, tracing out $C$) 
and the equality is due to the fact that the second unitary, $U_{BC}$, is local in the considered bipartition.
Thus the only way to establish maximal final entanglement between the principal systems is to already prepare it with operation $U_{AC}$.
This consumes time $\arccos(1/\sqrt{d})$ and requires initial state of $A$ and $C$ to be pure, i.e. $\ket{\alpha \gamma}$ because $C$ is not correlated with $AB$ initially (note that it does not pay off to start with partial entanglement in $\rho_{AB}$).
Furthermore, since the final state is pure and we are left with application of $U_{BC}$ only, the state of particle $B$ also has to be pure.
Summing up, after the first step the tripartite state reads $| \Psi_{AC} \rangle \ket{\beta}$.
In the remaining step we need to swap this maximal entanglement into the principal systems.
To estimate the time required by the swapping we compute the fidelity:
\begin{eqnarray}
    \mathcal{F}&=&|\langle \Psi_{AC}|\langle \beta|\Psi_{AB}\rangle |\gamma \rangle |\nonumber \\
    &=&\frac{1}{d}|\sum_{j=1}^d\sum_{k=1}^d \langle a_j|a^{\prime}_k \rangle \langle \beta|b^{\prime}_k \rangle \langle c_j|\gamma \rangle| \nonumber \\
    &\le&\frac{1}{d}\sqrt{\sum_j|\sum_k \langle a_j|a^{\prime}_k \rangle \langle \beta|b^{\prime}_k \rangle|^2}\sqrt{\sum_j |\langle \gamma|c_j \rangle|^2}\nonumber \\
%    &= &\frac{1}{d}\sqrt{\sum_j (\sum_k \langle a_j|a^{\prime}_k\rangle\langle \beta|b^{\prime}_k \rangle)(\sum_l \langle a^{\prime}_l|a_j \rangle\langle b^{\prime}_l|\beta  \rangle)}\nonumber \\
    &= &\frac{1}{d}\sqrt{\sum_{j,k,l} \langle a^{\prime}_l|a_j \rangle \langle a_j|a^{\prime}_k\rangle \langle \beta|b^{\prime}_k \rangle \langle b^{\prime}_l|\beta  \rangle} = \frac{1}{d},
\end{eqnarray}
where we have written $|\Psi_{AC}\rangle =\sum_j |a_jc_j\rangle/\sqrt{d}$ and $|\Psi_{AB}\rangle =\sum_k |a^{\prime}_kb^{\prime}_k\rangle/\sqrt{d}$ as the maximally entangled states (note possibly different Schmidt bases). 
Then we used the Cauchy-Schwarz inequality to obtain the third line.
Since $\{|c_j\rangle \}$ form a complete basis the last square root in the third line equals $1$ (sum of probabilities).
Rewriting the remaining mod-squared and using the completeness of the bases $\{|a_j\rangle \}$ and $\{|b_k^{\prime}\rangle \}$ we arrive at the final result. 
The total time required by both steps is therefore at least $\Gamma_{\mathcal{SMI}}=\Gamma_1+\Gamma_2\ge \arccos(1/\sqrt{d})+\arccos(1/d)$.
\end{proof}

%\lhead{\emph{Bibliography}} % Change the page header to say "Bibliography"
%\addcontentsline{toc}{chapter}{Bibliography}
%\bibliographystyle{unsrtnat} % Use the "unsrtnat" BibTeX style for formatting the Bibliography
\bibliography{Bibliography}

%\clearpage

\end{document}